  \providecommand\BibTeX{{%
    \normalfont B\kern-0.5em{\scshape i\kern-0.25em b}\kern-0.8em\TeX}}}
\title[Network Fairness and Scalability in Blockchains]{We might walk together, but I run faster: Network Fairness and Scalability in Blockchains}
\author{Anurag Jain}
\email{anurag.jain@research.iiit.ac.in}
\author{Shoeb Siddiqui}
\email{shoeb.siddiqui@research.iiit.ac.in}
\author{Sujit Gujar}
\email{sujit.gujar@iiit.ac.in}
\affiliation{%
  \institution{Machine Learning Lab, \\International Institute of Information Technology}
  \streetaddress{3\textsuperscript{rd} Floor, KCIS Building\\ IIIT-H Campus, Gachibowli}
  \city{Hyderabad}
  \state{Telangana}
  \country{India}
  \postcode{500032}
}
\begin{abstract}
Blockchain-based Distributed Ledgers (DLs) promise to transform the existing financial system by making it truly democratic. In the past decade, blockchain technology has seen many novel applications ranging from the banking industry to real estate. However, in order to be adopted universally, blockchain systems must be scalable to support a high volume of transactions. As we increase the throughput of the DL system, the underlying peer-to-peer network might face multiple levels of challenges to keep up with the requirements. Due to varying network capacities, the slower nodes would be at a relative disadvantage compared to the faster ones, which could negatively impact their revenue. In order to quantify their relative advantage or disadvantage, we introduce two measures of network fairness, $p_f$, the probability of frontrunning and $\alpha_f$, the publishing fairness. We show that as we scale the blockchain, both these measures deteriorate, implying that the slower nodes face a disadvantage at higher throughputs.  It results in the faster nodes getting more than their fair share of the reward while the slower nodes (slow in terms of network quality) get less. Thus, fairness and scalability in blockchain systems do not go hand in hand.
\par In a setting with rational miners, lack of fairness causes miners to deviate from the ``longest chain rule'' or \emph{undercut}, which would reduce the blockchain's resilience against byzantine adversaries. Hence, fairness is not only a desirable property for a blockchain system but also essential for the security of the blockchain and any scalable blockchain protocol proposed must ensure fairness.
\end{abstract}
\keywords{Distributed Ledgers, Scalable Blockchains, Fairness, Peer-to-Peer Networks}
\newcommand{\BibTeX}{\rm B\kern-.05em{\sc i\kern-.025em b}\kern-.08em\TeX}
\begin{document}


\pagestyle{fancy}
\fancyhead{}


\maketitle 


\section{Introduction}\label{sec:introduction}
Blockchain-based \emph{Distributed Ledgers} (DLs) promise to transform the existing financial system. The idea behind such a transformation is to replace centralized institutions that govern the system by a \emph{decentralized} peer-to-peer network of nodes. The key idea in such a DL system is that the system offers the right incentives to the nodes to act honestly according to the blockchain protocol's rules. Thus, any node can voluntarily choose to participate in the system and incur a computational cost in the expectation of being rewarded. We call such participants in the DL \emph{nodes} or \emph{agents} \footnote{We use nodes and agents interchangeably. We refer to them as nodes when we consider them as a part of a network and agents when we consider them as players in a mining game.}. We believe that such a system can be truly democratic since anyone can choose to participate.

\par If a decentralized system is not \emph{fair}, i.e., the agents do not receive proportionate incentives, they will prefer not to join the system. Consequently, the system will not remain democratic and decentralized if it excludes some agents. In this work, we analyze the fairness characteristics of blockchain-based DLs for which, we consider all the agents being honest but with different network capacity. Although the fairness properties that we describe appear to be healthy for current blockchain systems, they deteriorate quickly as we scale the system to higher throughputs. 

\subsection{Overview}
\par Although cryptocurrencies like Bitcoin and Ethereum are quite popular today, they still lag behind centralized payment systems like Visa in terms of transaction rates and time to finality. As of October 2020, Bitcoin's and Ethereum's network processes an average of 3-4 and 10 transactions per second (TPS), respectively. In contrast, Visa's global payment system handled a reported 1,700 TPS and claimed to be capable of handling more than 24,000 TPS \cite{visa_claim}. For a cryptocurrency to be adopted universally, it must be able to scale to process transactions at much higher throughput, i.e., TPS rate. Hence, blockchain protocols must be scalable to be suitable for widespread adoption.

%
\par However, there are many challenges on the road to scaling block-chain based DLs. Garay et al. \cite{cryptoeprint:2014:765} and Kiyayas et al. \cite{cryptoeprint:2015:1019} show that existing blockchain protocols suffer from a loss of security properties as we scale the system. These security properties are fundamental to the operation of a robust DL. \cite{cryptoeprint:2014:765} and \cite{cryptoeprint:2015:1019} consider a model with two types of agents, honest and adversarial where the adversary tries to attack the ledger by strategically forking the blockchain. A successful fork would allow the adversary to perform a double-spending attack.
%

In this paper, first, we consider a setting in which all agents are honest and show that disparities in the connection to the peer-to-peer network can make the system unfair. In such a case, nodes with a better internet connection will be able to grab a larger share of the reward while those with slower connections might lose out. We show that this disparity significantly increases as we increase the throughput of the system. Notice that improving the quality of the overlay network may be more complicated than making protocol-level changes that may be implemented by merely updating the software clients. 

\par In literature, it is typically assumed that all the agents have equal access to the network, albeit with some finite delay. However, this is seldom the case in practice where some nodes may have better internet connections than others. For the first time, we introduce asymmetry in modeling network connections by assuming different delays for different nodes. Hence, faster nodes would have shorter delays, while slower nodes would have longer delays which in turn results in asymmetry in the rewards collected by these agents. We first analyse consequences of this model in a setting with honest agents and then extend our discussion to rational agents.


\par In order to analyze and quantify network fairness, we introduce two measures of fairness based on network events associated with broadcasting a transaction and broadcasting a block. First, we introduce \emph{frontrunning}, an event associated with a node receiving a transaction. Frontrunning (that we deal with in this paper) occurs when a node confirms a transaction before someone else hears about the transaction. We measure $p_f$, the probability of this event happening between two fractions of the network.
If $p_f$ is high, the faster nodes would consistently be able to grab high-value transactions while the slower ones would only be able to pick low-value ones left out by others. Thus, a high $p_f$ would negatively impact some agents' revenue. We show that if we try to scale a Bitcoin-like system to the throughput offered by the likes of Visa, $p_f$ approaches to nearly 1, which implies that the slower nodes in the system will rarely be able to mine any high-value transactions that would result in these nodes receiving minimal reward in exchange for their mining efforts.

\par We then consider the process of broadcasting a block through the network. \emph{Publishing fairness} quantifies the advantage a node might have over other nodes in broadcasting a block. If a node is able to propagate its block faster than other nodes, in case of an eventual fork, its fork would have a higher probability of being accepted. Since we know that at higher throughputs forks become more common, faster nodes would be able to get more blocks accepted while those of slower nodes would frequently be orphaned. Thus, the slower nodes, would not be able to even gather the fixed block rewards. 
\par As both of these measures deteriorate as with increased throughput, small variations in network access may lead to the system becoming unfair for the slower nodes. This would result in some agents gaining more than their fair share of reward while some agents earn less. This could certainly impact the profitability of the agents that earn less since they still need to pay for the costs associated with mining. Thus, it may lead to drop in the agents maintaining the DL since agents that are unable to accumulate enough reward to break even the mining costs might shut down their mining operation or they might adopt strategic behavior to collect more rewards than that obtained by following the protocol honestly, either of which would reduce the security of the blockchain.

\par We discuss possible behavior that a lack of network fairness could elicit from rational agents. Their behavior could potentially hurt the stability of the system and reduce the effective throughput of the system. We use simulations to show that as the fairness reduces, the default strategy mining on top of the longest chain does not remain the dominant strategy which means that rational agents gain more reward by intentionally forking the longest chain. This could have adverse effect on the resilience of the blockchain against byzantine adversaries, making it less secure\footnote{A byzantine adversary typically tries to defraud the users of the payment system by trying to create double-spending transactions.}.
Hence, even though we scale the system to increase the throughput, we might not find much practical advantage due to these issues. 
\par Thus, the potential of blockchain technology is hindered by the capabilities of the underlying networking infrastructure.\\

Hence, in this work, we:
\begin{enumerate}
    \item Introduce a notion of network fairness in blockchains and highlight its importance. (Section \ref{ssec:fairness_dfns})
    In particular, we introduce \emph{frontrunning} (Definition \ref{def:frontrun}) and \emph{publishing fairness} (Definition \ref{def:rel_fairness}) in the context of network fairness and analyze them for existing blockchain systems.
    \item Study the effect of increasing throughput on network fairness. (Section \ref{sec:analysis}) and provide bounds on \emph{frontrunning} as well as  \emph{publishing fairness} (Theorems \ref{thm:frontrun} and \ref{thm:alpha_f})
    \item Discuss a few consequences of a lack of fairness in terms of creating strategic deviations that may be detrimental to the security of the blockchain. (Section \ref{sec:strategic})
\end{enumerate}
\section{Preliminaries} \label{sec:preliminaries}
A distributed ledger (DL) is a database replicated and shared across multiple nodes in consensus. Blockchain is a type of append-only ledger. When a node wants to append a transaction to the ledger, it broadcasts it to other nodes. All nodes vote by a consensus algorithm to invalidate the existing ledger and replace it with the updated one. (In case of Bitcoin, once a new block is added to the longest chain, the original chain is not considered valid in the presence of the longer chain \cite{nakomoto})
\subsection{Proof-of-Work}
\par Nodes in a proof-of-work blockchain system vote on changes to the DL via their CPU (or GPU in some cases) by trying to mine a block. The chain having the most blocks and correspondingly the maximum proof-of-work is selected as the consensus value. In order to mine a block, the player must successfully find a ``nonce'' value that, along with the other contents of a block, hashes to a value less than a given target. Due to the nature of the hash function, this is a random process, and miners must repeatedly sample different nonce values to mine a block successfully.
\begin{definition}[Fail Function]
We define $\text{fail}(\phi, t)$ as the probability of $\phi$ fraction of the network failing to mine a block in $t$ units of time.
\end{definition}
$\text{fail}(\phi, t) = (1-p)^{\phi\times Ht}$ where $p$ is the probability of a query being successful and $H$ is the cummulative hash rate of the entire network. A more detailed explanation for this is provided in Section \ref{proof:fail} of the supplementary material.
\\For convenience, we sometimes use $\text{fail}(\phi) = \text{fail}(\phi, 1)$, the probability of $\phi$ fraction of the network failing to mine a block in a unit of time.
\subsection{Throughput of a Blockchain-based DL} \label{ssec:speed_security_tradeoff}
The throughput of the blockchain-based DL system depends on two parameters:
\begin{inparaenum}[(i)]
    \item the Block Creation Rate $\lambda$ and 
    \item the Block Size $b$
\end{inparaenum}
with the throughput being proportional to $\lambda \times b$.
\par Although we can increase any of the two parameters to increase the throughput, we suffer consequences due to the limitations of the underlying peer-to-peer network.
\paragraph{Effect of increasing the block creation rate}
By increasing the block creation rate, we risk a node mining a block before it receives the latest block mined by the network.
\paragraph{Effect of increasing the block size}
According to observations by Decker and Wattenhofer \cite{decker2013information}, there exists a linear relationship between the block size b and the time taken by the block to be propagated throughout the network. Hence, if we increase the block size, the time taken by it to propagate increases.
\par Increasing the block size or increasing the block creation rate lead to the deterioration of security properties in Bitcoin-like linear blockchain protocols, as shown by Sompolinsky and Zohar \cite{cryptoeprint:2013:881}. Our contribution is parallel to them since we show that they also cause a loss of fairness.  In this paper, we consider settings with honest players and settings with rational players independently, we also define adversarial players here to show that they can pose threat to security of the blockchain.
\begin{definition}[Honest Agent]
An agent is said to be an \emph{honest player} if and only if it does not deviate from the protocol rules.
\end{definition}
\begin{definition}[Rational Agent]
An agent is said to be a \emph{rational player} if it may strategically deviate from the protocol rules if the deviation is expected to yield a higher reward.
\end{definition}
\begin{definition}[Adversarial Agent]
An agent is said to be an \emph{adversarial player} (and sometimes also known as ``Byzantine'' in the literature) may also strategically deviate from the protocol. However, the adversary's goal is to disrupt the operation of the protocol, and it does \textbf{not} try to maximize its reward.
\end{definition}
\subsection{Reward Model} \label{ssec:reward_model}
There are two principal ways of rewarding the nodes in a blockchain system:
\begin{enumerate}
    \item \emph{Block Reward} - The reward miners can assign to themselves for mining a block. This reward mints new currency, adding to the total amount of currency in circulation.
    \item \emph{Transaction Fees} - This is the fee offered by the users for miners' services by providing incentives to include their transactions in the blocks. Typically, the users are allowed to decide the transaction fees they wish to offer while creating a transaction.
\end{enumerate}
\subsubsection{Other sources of Reward}
Although we consider only block rewards and transaction fees to be the contributors of reward to the miners, miners might earn additional revenue from implicit sources as well. For instance, Daian et al. \cite{daian2019flash} show that there is considerable miner value in \emph{Order Optimization} in \emph{Decentralized Exchanges}, where the miners can rearrange transactions and potentially insert their own in a block to yield a higher reward. In this case, the miners that frontrun can gain additional revenue by grabbing revenue from \emph{order optimization}.
\subsection{Network Model} \label{ssec:network_model}
Typically, the nodes participating in the blockchain communicate with each other using the internet. They form a peer-to-peer network where each node is connected to a few other nodes (which we refer to as neighbors). We assume that the communication between two nodes has a finite delay. This communication delay may vary significantly depending upon numerous factors such as network congestion, network outages, and ISP bottlenecks, but for the sake of analysis, we abstract out these factors.\\
In this paper, we are concerned with the following two actions over the network:
\begin{enumerate}
    \item Broadcasting a Transaction: when a user wishes to make a transaction on the ledger, he/she cryptographically signs a transaction and sends it to a small number of nodes. Each node in the network propagates the transaction to its neighbors. Therefore, a transaction reaches all the nodes in the network after some time.
    \item Broadcasting a Block: Similarly, when a node mines a new block, it sends it to its neighbors, who then propagate it further. Therefore, a block reaches all the nodes in the network after some time.
\end{enumerate}
There are some delays associated with both processes. We assume the delay of receiving a transaction and broadcasting a block to be dependent upon both the quality of a node's connection and the quality of the overall peer-to-peer network. We quantify this delay as $\delta$ refers to the total time taken for any broadcast by a node to reach all nodes in the network.
\section{Different Notions of Fairness}
\label{ssec:fairness_dfns}
\subsection{$\eta$-approximate fairness}
Pass and Shi \cite{pass2017fruitchains} defined $\eta$-approximate fairness as follows
\footnote{In their original paper, Pass et al. used the term $\delta$-approximate fairness however we use a different symbol to distinguish it from the $\delta$ delay we use throughout the paper}:
\begin{definition}[$\eta$-approximate fairness]{\cite{pass2017fruitchains}}
\label{def:approx_fairness}
A blockchain protocol has $\eta$-approximate fairness if, with overwhelming probability, any honest subset controlling $\phi$ fraction of the compute power is guaranteed to get at least a $(1-\eta)\phi$ fraction of the blocks in a sufficiently long window.
\end{definition}

\par The intuition behind this is that in a fair protocol, a agent that controls a $\phi$ fraction of the computational resources should receive a $\phi$ fraction of the rewards. Though $\eta$-approximate fairness has its own merit and importance, the definition manages to capture the intuition only if the reward for each block is similar. In DLs that operate at a slower speed, this variation may not be significant. However, at higher throughputs, the block reward may vary significantly among the blocks. In which case, we should also factor in the reward that the agents get for their blocks. Secondly, this definition does not capture the disparity among different nodes, i.e., which nodes gain more than their fair share and which nodes get less. The measures that we define compare different nodes or sets of nodes to highlight which ones are at a relative advantage and which ones are at a disadvantage.
\par Here, we wish to analyze network fairness and establish measures independent of the computational power of the nodes we are comparing. Hence, we base our definitions on network events instead.

\subsection{Frontrunning ($p_f$)}\label{def:frontrun}

In some cases, it might be easier to analyze and quantify ``unfairness'' rather than fairness. Since only the transaction that comes first is said to be the valid one, any subsequent blocks that include a copy of the transaction lose out on the transaction fees and waste their space, which could have accommodated an unconfirmed transaction. For every transaction, we can imagine a race among the nodes to grab its transaction fees by mining a block that includes it. We can say that the node that manages to win the race by mining a block containing the transaction before everyone else wins the race and has successfully \emph{frontrun} everyone else\footnote{As we discuss in later sections, this may not be the only way to win the race. In some cases agent can change the results of the race by \emph{strategically} deviating from the protocol.}\footnote{We borrow the term from Wall Street jargon where the term originates from the era when stock market trades were executed via paper carried by hand between trading desks. The routine business of hand-carrying client orders between desks would normally proceed at a walking pace. However, a broker could run in front of the walking traffic to reach the desk and execute his order first. \cite{wiki:frontrunning}}. The system will be fair if every node has an equal probability of winning the race. 

\par We define the event \texttt{frontrun\_1} for a node $L$ ($L$ here, stands for Loser) as the event in which he loses the race described earlier. In this case, the node does not gain any reward since he failed to mine the block before everyone else. \\
We define the event \texttt{frontrun\_2} for $L$ as the event in which some other node manages to win the race even before $L$ starts the race. That is, some other node mined a block containing the transaction before $L$ receives the transaction. 
Clearly, $\texttt{frontrun\_2} \subseteq \texttt{frontrun\_1}$. \\

To capture it more formally, 
\begin{definition}[$p_f$]
We call $\{p_f\}^{M}_{m}$ the probability of the event \texttt{frontrun\_2} between the top $M$ percentile of the nodes and the bottom $1-m$ percentile of the nodes in terms of network delays. That is the probability that some node in the top $M$ percentile (in terms of network speed) manages to \texttt{frontrun\_2} all nodes in the bottom $1-m$ percentile.
\end{definition}

Ideally, $P(\texttt{frontrun\_2}) = 0$ since all nodes should start at the same time (this being a race). However, as we analyze in Section \ref{ssec:frontrunning_analysis}, this probability may become significantly high due to varying network speeds.

\subsection{Publishing Fairness ($\alpha_f$)}\label{def:rel_fairness}

Faster internet speed can not only provide an advantage in receiving new transactions but also yield an upper hand in broadcasting blocks. Consider two nodes A and B that mine a block simultaneously. We define $\alpha_f$ as the ratio of the probability that the majority accepts A's block to the probability that the majority accepts B's block. Thus, it quantifies the advantage of A in terms of publishing a block and claiming the associated reward. A lack of publishing fairness implies that not only slower nodes are less likely to receive reward transaction fees in the mined block but the are also less likely to receive the fixed block reward associated with mining a new block.
The intuition behind this definition is that over multiple rounds, this would be the ratio of their conflicting blocks getting accepted.

\[\alpha_f(A,B) = \frac{P(\frac{\text{A's block getting accepted}}{\text{A and B mine a block simultaneously}})}{P(\frac{\text{B's block getting accepted}}{\text{A and B mine a block simultaneously}})}\]

\noindent\textbf{Remark 1}\label{remark:alpha_f_unfair}
Like $p_f$, $\alpha_f$ is also a measure of ``unfairness'' rather than fairness, i.e., higher $\alpha_f$ implies lesser fairness in the system. The ideal value of $\alpha_f$ is 1, when the blocks mined simultaneously by both the nodes are equally likely to get accepted.

\noindent\textbf{Remark 2}\label{remark:alpha_f_simultaneous}
$\alpha_f$ only accounts for the events in which both A and B mine their blocks simultaneously since both the probabilities are conditioned on the blocks being mined simultaneously.

\noindent\textbf{Remark 3}\label{remark:alpha_f_rewards}
Ideally $\alpha_f$ should be close to 1. The high value of $\alpha_f$ indicates that the faster agent can collect more block rewards than the slower agent even though they both do exert the same amount of computation. 

\setlength{\textfloatsep}{0.1cm}
\setlength{\floatsep}{0.1cm}
\section{Analyzing Network Fairness\label{sec:analysis}}
A node that can hear new transactions and propagate blocks faster than other nodes gains an unfair advantage over its peers. In this section, we analyze and quantify the advantage. We believe that the results presented in \ref{par:frontrun_result} and \ref{par:a_f_results} to be a significant insight offered by our paper.

\subsection{Analyzing Frontrunning}
\label{ssec:frontrunning_analysis}
Consider a node with a poor network connection that receives a transaction with a significant delay as compared to others that do not. We now analyze the probability of the event $\texttt{frontrun\_2}$ happening for the node. For the Bitcoin Network, according to \cite{bitcoinstats} it takes less than 4 seconds for a transaction to reach the 50\textsuperscript{th} percentile but more than 15 seconds to reach the 90\textsuperscript{th} percentile. It is likely that the slower nodes only hear about a transaction once a faster node has already confirmed it. Let us consider the probability of $50\%$ nodes in the top percentile in terms of network speed being able to confirm the transaction before the bottom $10\%$ of the nodes receive it.\footnote{We make an additional assumption here that all nodes should have equal computing power. This is ideal from the expectation that the system should be decentralized. Hence, computational power should be ideally distributed equally among the nodes.}
\par Let $d$ be the advantage offered to the top $M$ fraction of the nodes in terms of time (in our example, this amounts to 11 seconds), $p$ is the probability of query being successful, and $H$ be the hash rate of the network.
    \begin{theorem}[Lower bound of $\{p_f\}^{M}_{m}$]
    \label{thm:frontrun}
    $\{p_f\}^{M}_{m} >  M\lambda d - \frac{1}{2}\big(M\lambda d\big)^2$
    \end{theorem}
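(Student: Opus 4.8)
The plan is to lower‑bound $\{p_f\}^{M}_{m}$ by the probability that the fast nodes mine at least one block during the head start they enjoy over the slow nodes, and then to estimate that probability from below with two elementary inequalities.

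First I would unpack the definition of the event \texttt{frontrun\_2}. Writing $d$ for the gap between the time the top $M$ fraction receives the transaction and the (earliest) time a node in the bottom $1-m$ fraction receives it, I note that \texttt{frontrun\_2} between these two sets certainly occurs whenever the top $M$ fraction mines \emph{some} block during that length‑$d$ window, since such a block can be taken to include the transaction and it is produced before any of the targeted slow nodes have even heard of the transaction. Hence $\{p_f\}^{M}_{m} \ge 1 - \text{fail}(M,d)$, i.e. one minus the probability that an $M$‑fraction of the network fails to mine for $d$ units of time.

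Next I would substitute the Fail Function, $\text{fail}(M,d) = (1-p)^{MHd}$, apply $1-p \le e^{-p}$, and use the identification of the block creation rate $\lambda$ with the per‑unit‑time mining rate $pH$ (equivalently $-H\ln(1-p)$), obtaining $\text{fail}(M,d) \le e^{-M\lambda d}$ and therefore $\{p_f\}^{M}_{m} \ge 1 - e^{-M\lambda d}$. Finally, putting $x = M\lambda d \ge 0$, I would close the argument with the bound $1 - e^{-x} > x - \tfrac12 x^2$, which holds for all $x>0$ because $g(x) := 1 - e^{-x} - x + \tfrac12 x^2$ has $g(0)=0$, $g'(0)=0$, and $g''(x) = 1 - e^{-x} > 0$ on $(0,\infty)$, so $g$ is strictly increasing there.

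I expect the main obstacle to be the modeling step rather than the calculus: making precise what $d$ denotes (the separation between the two percentiles' reception times), arguing cleanly that \texttt{frontrun\_2} is implied by ``a fast node mines within its head‑start window,'' and linking $p$, $H$ and $\lambda$ so the final expression comes out purely in terms of $M$, $\lambda$ and $d$. Once that correspondence is fixed, the rest is a one‑line application of $1-p \le e^{-p}$ together with the quadratic lower bound on $1-e^{-x}$.
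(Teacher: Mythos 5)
Your proposal is correct and follows essentially the same route as the paper: identify $\{p_f\}^{M}_{m}$ with $1-\text{fail}(M,d)=1-(1-p)^{MHd}$, set $\lambda = pH$, and lower-bound by the quadratic $M\lambda d - \tfrac12(M\lambda d)^2$. The only (harmless) difference is that you justify the final inequality via $1-p\le e^{-p}$ and $1-e^{-x}>x-\tfrac12 x^2$, whereas the paper asserts the bound on $1-(1-p)^{MHd}$ directly; your version actually supplies the justification the paper omits.
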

\begin{proof}
We show that the probability of the event that $M$ fraction of the network manages to mine a block in time $d$ increases with the block creation rate $\lambda$. A formal proof is provided in Section \ref{proof:frontrun} of the supplementary material.
\end{proof}
Theorem \ref{thm:frontrun} shows that $\{p_f\}^{M}_{m}$ increases monotonically with increasing the block creation rate $\lambda$ since its lower bound increases monotonically. One can observe that in order to keep the probability of this event sufficiently low, $d$ must be reduced while increasing $\lambda$ to scale the blockchain. Although, it may seem that the lower bound is independent of the bottom $m$ percentile selected but this would have been incorporated in $d$ since $d$ increases as $m$ decreases.
\subsubsection{Practical Significance of Theorem \ref{thm:frontrun}\label{par:frontrun_result}} As of October 2020, $\{p_f\}^{0.5}_{0.9}$ approximates to $0.01$ for the Bitcoin Network. However, if we were to scale the Bitcoin Network to a throughput similar to offered by the likes of Visa Network by increasing the block creation rate to 566 blocks every 10 minutes (by reducing the difficulty and keeping the block size same) $\{p_f\}^{0.5}_{0.9}$ goes up to $0.99$ \footnote{We assume there are no significant improvements in the peer-to-peer overlay network}. Unfortunately, since at the time of writing, statistics on transaction propagation times in Ethereum were not published, we estimate $\{p_f\}^{0.5}_{0.9}$ for Ethereum by using the same delays as Bitcoin. We find that the estimated value of $\{p_f\}^{0.5}_{0.9}$ (as of October 2020) is around $0.36$, which is considerably higher than that of Bitcoin. In Figure \ref{fig:p_f}, we plot the variation of $\{p_f\}^{0.5}_{0.9}$ with increase in the throughput.
\setlength{\textfloatsep}{0.1cm}
\setlength{\floatsep}{0.1cm}
\begin{figure}
    \centering
      \includegraphics[width=0.5\textwidth, natwidth=264.53, natheight=141.3]{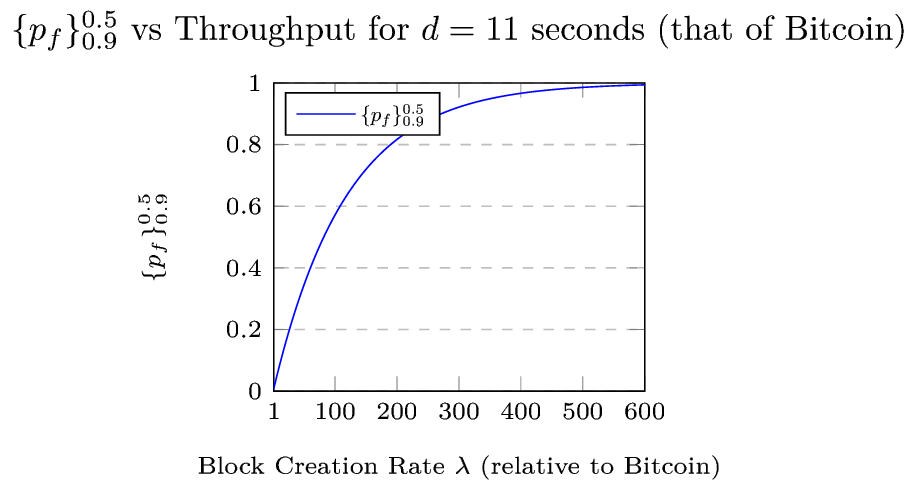}
      \caption{Variation of $\{p_f\}^{0.5}_{0.9}$ as we scale the blockchain}
      \label{fig:p_f}
\end{figure}
\subsubsection{Variation in Block Rewards\label{ssec:variation}}
Our analysis in Section \ref{ssec:frontrunning_analysis} shows that as we scale the blockchain to higher throughputs, some agents would be able to grab high-value transactions before others consistently. This means that some agents would be producing blocks with higher rewards, while others would produce blocks with lower rewards. Hence, a lack of network fairness may be able to induce a greater variation in block rewards. We discuss further implications of this variation in Section \ref{sec:strategic}.
\paragraph{Lack of incentive for information propagation} The inability of the peer-to-peer network to keep up with the ledger's desired throughput is further exacerbated by the fact that the nodes do not have any incentive to participate in broadcasting information. In fact, they have an incentive to keep the knowledge of transactions to themselves, as shown by Babaioff et al. \cite{babaioff2012bitcoin}. By broadcasting a transaction to other nodes, a agent is potentially increasing the number of nodes competing to include the transaction in their blocks and collect the corresponding transaction fees. Thus, offering additional incentive to agents for propagating transactions may speed up the the broadcast of a transaction.
\subsection{Publishing Fairness in Broadcasting Blocks}
\begin{figure}
\centering
\includegraphics[width=0.5\textwidth, natwidth=114.52, natheight=84.76]{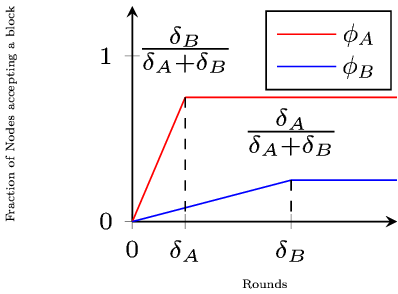}
\caption{Fraction of nodes accepting a block vs rounds}
\label{fig:propagation_plot}
\end{figure}
Let us assume that the execution happens in ``rounds'' in which the nodes make $q$ queries each to the Hash Function. At the boundaries of rounds, the nodes can communicate with their neighboring nodes. In our example here, we assume the duration of a round to be 1 second. If a node succeeds in mining a block in a round, it will begin broadcasting it to its neighbors when the round ends. Similarly, if a node receives a block at the beginning of a round, it will broadcast it to its neighbors at the end of that round.
\par We assume that there are $n$ nodes, and all nodes are honest. Hence, they follow the Bitcoin protocol's strategy of picking the oldest block in case of a tie and publishing a block as soon as it is mined. 
\par Consider the event in which exactly two nodes, A and B mine a block simultaneously in the same round. Let $\delta_A$ and $\delta_B$ ($\delta_A \leq \delta_B$, without loss of generality) be the delay associated with broadcasting the blocks to the entire network. At a time $t>\delta_B$, the network will be split into two fractions:
\begin{itemize}
    \item $\phi_A$: The fraction of nodes in the network which claim to have received the block mined by A before the block mined by B.
    \item $\phi_B$: The other fraction of nodes in the network which claim to have received the block mined by B before the block mined by A.
\end{itemize}
\par Since by $\delta_B$, all nodes in the network would have received the blocks mined by A and B, $\phi_A + \phi_B = 1$.
\par Let $\phi_A^i$ and $\phi_B^i$ be the fraction of network accepting A and B at the $i^\text{th}$ round. Then, $\alpha_f$ can be approximated by Theorem \ref{thm:alpha_f}.
\begin{figure*}[!t]
    \begin{minipage}{0.32\textwidth}
    \centering
        \includegraphics[width=1.05\textwidth, natwidth=348.94, natheight=304.51]{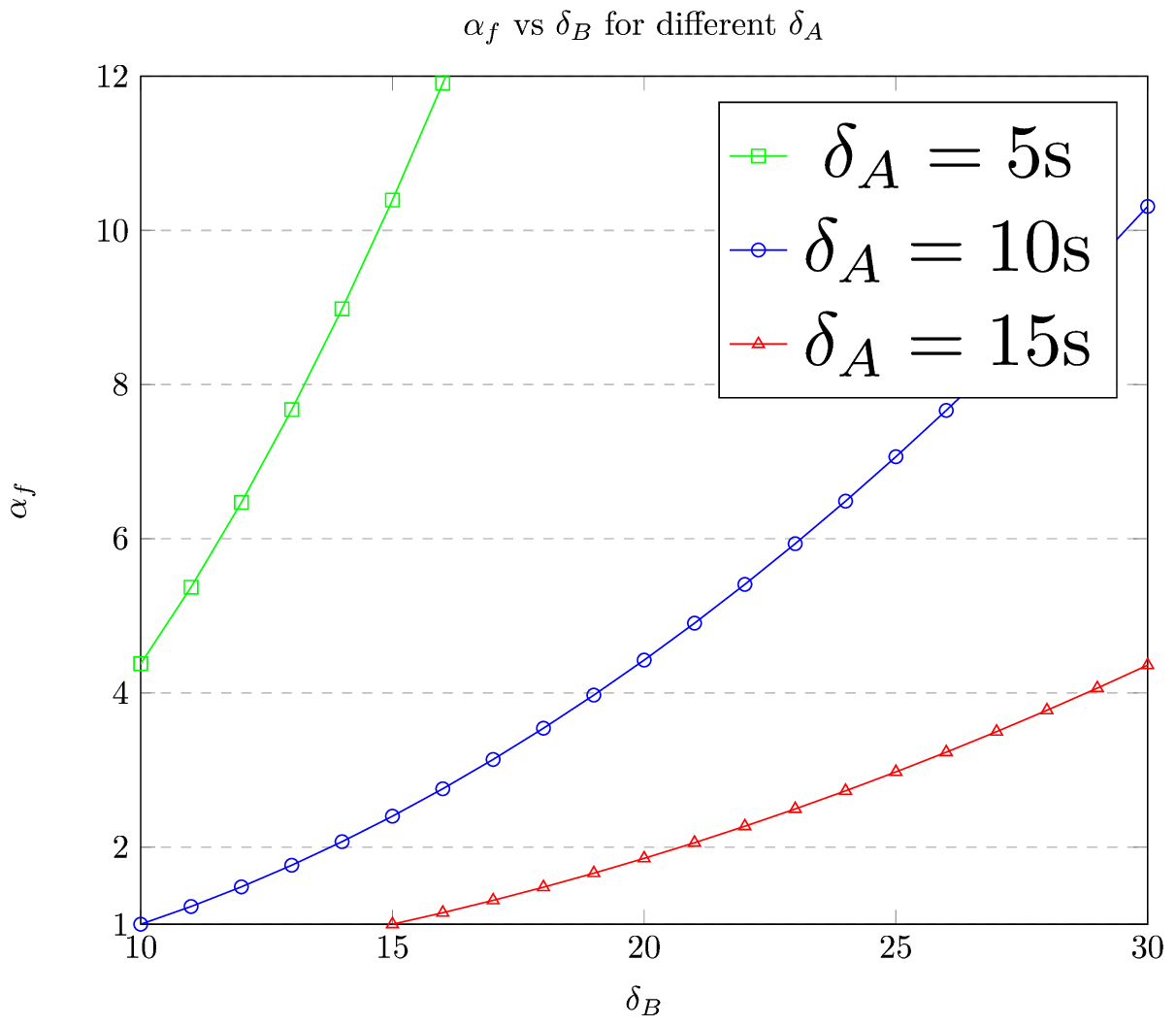}
        \label{fig:alpha_vs_delta}
\end{minipage}\hfill\begin{minipage}{0.32\textwidth}
\centering
        \includegraphics[width=1.05\textwidth, natwidth=348.94, natheight=304.51]{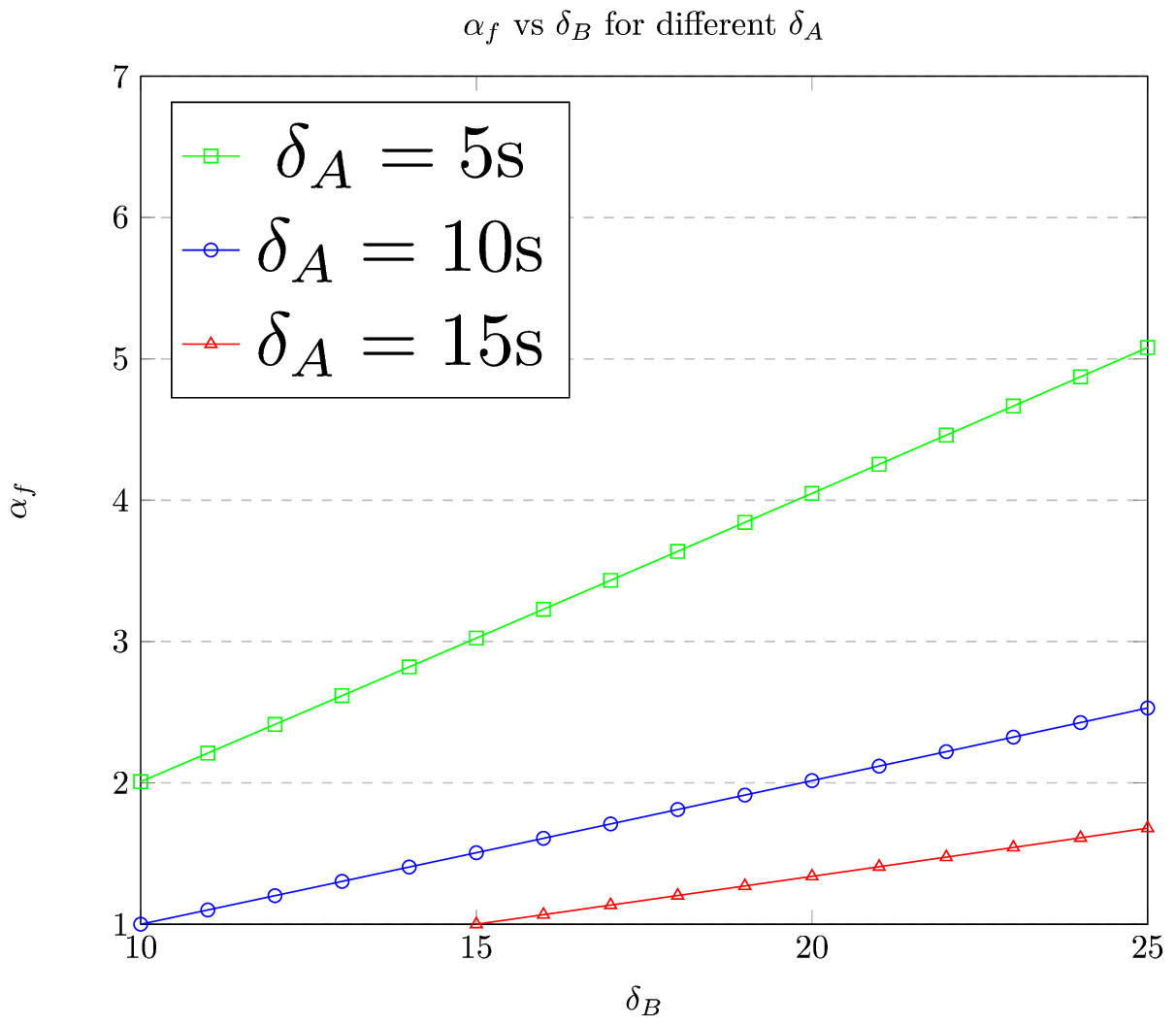}
        \label{fig:alpha_vs_delta2}
\end{minipage}\hfill\begin{minipage}{0.32\textwidth}
\centering
        \includegraphics[width=1.05\textwidth, natwidth=351.42, natheight=305.95]{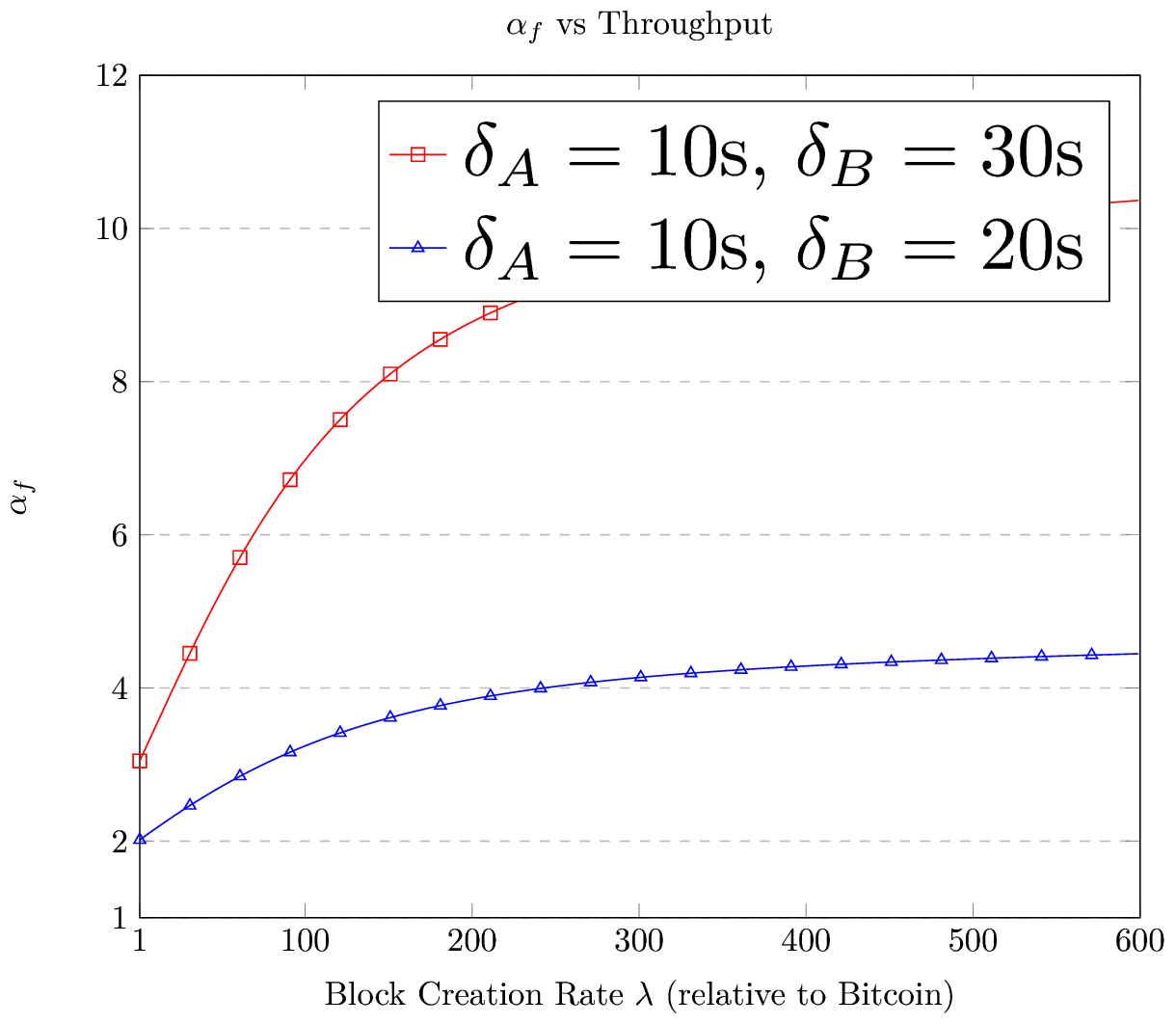}
        \label{fig:alpha_vs_f}
  \end{minipage}
\end{figure*}
\begin{theorem}[Approximation of $\alpha_f$]
\label{thm:alpha_f}
\begin{equation}
    \alpha_f = \frac{
    \psi
    }{
    1-\psi
    }
\end{equation}
where\\
\resizebox{0.99\hsize}{!}{$
    \psi = \sum_{i=1}^{\infty}  \bigg[\prod_{j=0}^{i-1} [(1-\text{fail}(\phi_A^j))(1-\text{fail}(\phi_B^j))+\text{fail}(\phi_A^j)\text{fail}(\phi_B^j)]
    \times (1-\text{fail}(\phi_A^i))\text{fail}(\phi_B^i) \bigg]
     $}
\end{theorem}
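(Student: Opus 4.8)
The plan is to view the competition between A's block and B's block as a round-by-round first-passage process and to read off $\psi$ as the probability that A's block ends up on the canonical chain. First I would fix the (deterministic) propagation of the two competing blocks, so that for every round $i$ the fractions $\phi_A^i$ and $\phi_B^i$ of the network mining on top of A's, respectively B's, block are determined; I take round $0$ to be the one in which A and B mine their blocks, where the two sub-chains have not yet been extended so effectively $q_0=1$, and once $i$ is past the propagation delay $\delta_B$ the fractions settle to constants $\phi_A,\phi_B$ with $\phi_A+\phi_B=1$ (nodes holding neither block during the propagation phase are absorbed into the ``fork not resolved'' case below). Because hash-function queries are independent across rounds and across the two disjoint node sets, round $i$ produces an independent outcome from four possibilities: both sides extend their chain, with probability $(1-\text{fail}(\phi_A^i))(1-\text{fail}(\phi_B^i))$; neither side extends, with probability $\text{fail}(\phi_A^i)\text{fail}(\phi_B^i)$; only A's side extends, with probability $(1-\text{fail}(\phi_A^i))\text{fail}(\phi_B^i)$; or only B's side extends, with probability $\text{fail}(\phi_A^i)(1-\text{fail}(\phi_B^i))$. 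Under the honest longest-chain rule with ties broken by block age, the first round in which exactly one side extends makes that side's sub-chain strictly the longest; the \emph{approximation} behind the theorem is to declare that block canonical at that point, and to treat a simultaneous double-extension as an equivalent fresh contest with the same split $\phi_A^{i+1},\phi_B^{i+1}$.

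Writing $q_j=(1-\text{fail}(\phi_A^j))(1-\text{fail}(\phi_B^j))+\text{fail}(\phi_A^j)\text{fail}(\phi_B^j)$ for the probability that round $j$ does not resolve the fork, independence across rounds gives
\[
P(\text{A's block is accepted})\;=\;\sum_{i=1}^{\infty}\Bigl(\prod_{j=0}^{i-1} q_j\Bigr)(1-\text{fail}(\phi_A^i))\,\text{fail}(\phi_B^i)\;=\;\psi,
\]
which is exactly the stated expression. It remains to show $P(\text{B's block is accepted})=1-\psi$. Since the two fractions are strictly positive and sum to one, $\text{fail}(\phi_A),\text{fail}(\phi_B)\in(0,1)$, so $q_j$ converges to $\text{fail}(\phi_A)\text{fail}(\phi_B)+(1-\text{fail}(\phi_A))(1-\text{fail}(\phi_B))$, which is strictly less than $1$ because $(1-a)b+a(1-b)>0$ for $a,b\in(0,1)$; hence $\prod_{j=0}^{N}q_j\to0$ and the event that the fork is never resolved has probability zero. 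The only two outcomes of positive probability are thus ``A accepted'' and ``B accepted'', giving $P(\text{B's block is accepted})=1-\psi$; equivalently one computes this probability directly as $\sum_{i\ge1}(\prod_{j<i}q_j)\,\text{fail}(\phi_A^i)(1-\text{fail}(\phi_B^i))$ and uses $(1-\text{fail}(\phi_A^i))\text{fail}(\phi_B^i)+\text{fail}(\phi_A^i)(1-\text{fail}(\phi_B^i))=1-q_i$ to telescope $\psi+P(\text{B})=\sum_{i\ge1}(\prod_{j<i}q_j)(1-q_i)=1-\lim_N\prod_{j<N}q_j=1$. Substituting the two acceptance probabilities into the definition of $\alpha_f$ in Section~\ref{def:rel_fairness} yields $\alpha_f=\psi/(1-\psi)$.

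The step I expect to be the main obstacle is justifying the two places where this argument is genuinely an approximation rather than an identity: collapsing a simultaneous double-extension into ``replay the same contest'' (a faithful treatment would have to track a nested fork whose two child blocks generically propagate with their own, different fractions, feeding back into $\phi_A^{i+1},\phi_B^{i+1}$), and treating the first one-block lead as permanent rather than itself subject to later reversal by a deeper reorganization. Both simplifications are standard in this line of work and are precisely what makes the closed form possible, so the proposal is to state them as explicit modelling assumptions up front and then carry out the elementary geometric-type summation above; the remaining computations (convergence of the series, the telescoping identity) are routine.
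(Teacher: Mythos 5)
Your proposal is correct and follows essentially the same route as the paper's proof: the same per-round decomposition into ``A wins,'' ``B wins,'' and ``undecided'' events, the same product-of-$q_j$ conditioning, and the same use of $P(\text{B accepted})=1-\psi$ to obtain $\alpha_f=\psi/(1-\psi)$. Your additional verification that $\prod_j q_j\to 0$ (so the fork resolves almost surely and $1-\psi$ really is B's acceptance probability), and your explicit flagging of the double-extension and no-later-reorganization approximations, are points the paper's proof asserts without justification, so these are welcome refinements rather than a different argument.
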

\begin{proof}
The proof of follows from first finding the probability of A being successful to getting its block accepted in $i^\text{th}$ round conditioned on the probability that neither of the blocks gain majority till the $(i-1)^\text{th}$ round. We then use \emph{Baye's Theorem} to find out the total probability. A formal proof is presented in Section \ref{proof:alpha_f} of the supplementary material.
\end{proof}
\subsubsection{Calculating $\alpha_f$}
 If we assume the propagation of the block in the network to be linear\footnote{Although, we expect it to be exponential in practice due to the GOSSIP algorithm, a linear assumption is rather optimistic and our results would be more pronounced in the exponential case.}, then by the end of $\delta_B$,  $\phi_A  = \frac{\delta_B}{\delta_A+\delta_B}$ and $\phi_B  = \frac{\delta_A}{\delta_A+\delta_B}$. Accordingly, the plot of $\phi_A$ and $\phi_B$ will be as shown in Figure \ref{fig:propagation_plot}. We then calculate $\alpha_f$ according to Theorem \ref{thm:alpha_f} and plot for varying network delays as well as varying throughputs.
\subsubsection{Results on $\alpha_f$ \label{par:a_f_results}}
\par In Figures \ref{fig:alpha_vs_delta} and \ref{fig:alpha_vs_delta2}, we plot the variation of $\alpha_f$ with the increase in $\delta_B$ for different values of $\delta_A$. We find that $\alpha_f$ grows exponentially with an increase in $\delta_B$ or a decrease in $\delta_A$, which implies that even small differences in network delays can make the system unfair.
\par In Figure \ref{fig:alpha_vs_f}, we plot the variation of $\alpha_f$ as we increase throughput for a fixed $\delta_A$ and $\delta_B$. We find that $\alpha_f$ grows with an increase in the block creation rate $\lambda$, which implies that as we scale the system, it becomes more unfair (from Remark \ref{remark:alpha_f_unfair}). However, the $\alpha_f$ increases slowly after a certain $\lambda$. We must note that the probability of simultaneously mining a block still keeps increasing exponentially, which is not factored in $\alpha_f$ (from Remark \ref{remark:alpha_f_simultaneous}). Hence, the overall ratio of blocks that A is able to include in the blockchain as compared to B still keeps increasing.
\subsection{A Few Tips and Tricks}
Improving the delays associated with broadcasting transactions and blocks for a node may not be as simple as improving the quality of a node's internet connection. A significant factor of the delay also depends upon the structure of the peer-to-peer network, the number of neighbors, and their delays. Decker and Wattenhofer \cite{decker2013information} suggest pipelining of information propagation to reduce the latency. However, we are more concerned with the changes a single node can implement to reduce its delay. Stathakopoulou et al. \cite{stathakopoulou2015faster} suggest implementing a Content Distribution Network that connects to nodes based on their geographic proximity. There have also been a few high-speed alternative ``relay'' networks developed to broadcast information quickly among a subset of nodes part of the network (\cite{bitcoin_fibre}, \cite{falconnet} and \cite{klarman2018bloxroute}). However, there are many concerns raised about the centralized nature of these networks. A node that is a part of such a network could undoubtedly gain an advantage over others. However, doing so may lead to further centralization, defeating the purpose of having a distributed ledger.
\section{Fairness and Strategic Deviations\label{sec:strategic}}
A lack of fairness causes some agents to gain more than their fair share of rewards, whereas some agents gain less than their fair share of rewards. Since all agents have similar underlying costs related to mining, it would make mining less profitable or even loss-making for some agents. We believe that this might open up the pandora's box of strategic deviations that might be not only unfair to the honest players but also detrimental to the health of the blockchain. Until now, we had considered that all players were honest and will not deviate from the protocol. However, the players may choose to deviate from the protocol if the deviation cannot be detected or penalized. When mining is not fair to some agents, they may have an incentive not to accept the consensus and depart from the honest strategy. This disagreement could possibly be reflected by forks that cause the agents in the system to split their votes. These deviations might be harmful to the health of a blockchain and make it less secure against adversaries. We now briefly discuss a few possibilities if they act rationally to maximize their expected reward.

\par Carlsten et al. \cite{carlsten2016instability} had shown that when there is a large variance in the reward earned from blocks, it might be profitable to intentionally fork blocks with high rewards. \emph{Petty mining}  \cite{carlsten2016instability} is a strategy in which, given a fork, the petty miner picks the fork, which has collected lesser transaction fees and, hence, offers the agent an opportunity to include transactions from the other fork and collect more transaction fee. \emph{Undercutting}  \cite{carlsten2016instability} is a strategy in which a agent intentionally forks a block with a high reward in order to collect some of the rewards while offering the rest of the reward to the petty agents that choose to mine on top of it. A slow node that does not have enough high-value transactions in its mempool might have an incentive to either fork the block mined by a frontrunner (undercutting) or given a fork pick the fork that offers an opportunity to collect a higher transaction fee (petty mining). If we assume that all agents are rational and hence petty miners since this strategy strictly dominates honest mining. Undercutting would allow a slower node to overcome both frontrunning and lack of publishing fairness. A slower node could fork a block mined by a faster node containing many high-value transactions due to frontrunning and include those transactions in its own block while leaving some of the transactions for others to include. Secondly, even if a node receives the block mined by a slower node later, it would drop the previous block and mine on top of this instead since it offers a higher reward.
\footnote{In fact, undercutting might even occur in case of an unintentional fork between a faster node and a slower node (since the block mined by a faster node will contain more high-value transactions). The probability of such forks increases along with $p_f$ and $\lambda$} In this case, it might not be in the best interest of the frontrunner to always pick the transactions offering higher rewards. 


\subsection{Modelling Bitcoin Mining As A Game}
We divide the network into two portions: \emph{slow} and \emph{fast}. The fast nodes can receive messages broadcasted by any node in the previous round, but the slow nodes have higher communication delays with certain nodes. Each node can choose from the following strategies:
\begin{enumerate}
    \item \texttt{petty}: The petty mining strategy described in \cite{carlsten2016instability}, it is same as the default strategy in case of no forks. It weakly dominates the default compliant strategy prescribed by Bitcoin but it is not harmful to the security of the blockchain on its own.
    \item \texttt{minor\_undercutting}: A node will undercut if the longest chain's reward is below a certain threshold. However, it would leave out a small constant reward as an incentive for the subsequent agents that pick the block.
    \item $\texttt{major\_undercutting}(\kappa)$: A node will undercut if the longest chain's reward is below a certain threshold. However, it would leave out a significant portion of the reward ($\kappa$) as an incentive for the subsequent agents that pick the block.
\end{enumerate}
The resulting game could be analyzed as a two-player bi-matrix game.
\subsection{Simulation-based Analysis}
\subsubsection{Simulator Details}
In order to study undercutting based strategic deviations in blockchains at high throughputs, we developed a lightweight simulator (described in Algorithm \ref{algo:simulator}).
\setlength{\textfloatsep}{0.1cm}
\setlength{\floatsep}{0.1cm}
\begin{small}
\begin{algorithm}[!tbp]
\SetAlgoLined
\KwInput{Distance matrix $D_{n\times n}$, Number of rounds $R$, Strategy $s_i \in S \forall i \in [n]$}
\KwResult{Expected reward of each player $P_i\quad \forall i \in [n]$}
\tcc*[h]{Set of mined blocks}\\
Initialize $B = {\text{genesis\_block}}$\\
Initialize $\texttt{longest\_chain} = 1$\\
 \For{$r=1$ \KwTo $R$}{
    \For{$i=1$ \KwTo $n$}{
        \If{$P_i$ wins a lottery in round $R$} {
            $B' = \emptyset$\\
            $\texttt{max\_reward} = 0$\\
            \For{$b \in B$} {
                \If{$b$ satisfies strategy $s_i$ and $r - b.\texttt{round} \geq D[b.\texttt{miner}][i]$}{
                    $B' = B'\cup b$\\
                    $\texttt{max\_reward} = \max(r - b.\texttt{round} + b.\texttt{leftover}, \texttt{max\_reward})$\\
                }
            }
            \For{$b \in B'$}{
                \If{$r - b.\texttt{round} + b.\texttt{leftover} \geq \texttt{max\_reward}$}{
                    $\hat{b}.\texttt{parent} = b$\\
                    $\hat{b}.\texttt{miner} = i$\\
                    $\hat{b}.\texttt{height} = b.\texttt{height} + 1$\\
                    $\hat{b}.\texttt{reward} = \min(r - b.\texttt{round} + b.\texttt{leftover}, \texttt{max\_block\_size})$\\
                    \uIf{$s_i \in \texttt{major\_undercut}$}{
                        $\hat{b}.\texttt{leftover} = \kappa$\\
                        $\hat{b}.\texttt{reward} = \hat{b}.\texttt{reward} - \kappa$\\
                    }\uElseIf{$s_i \in \texttt{minor\_undercut}$}{
                        \tcc*[h]{A small quantity $d$}\\
                        $\hat{b}.\texttt{leftover} = d$\\
                        $\hat{b}.\texttt{reward} = \hat{b}.\texttt{reward} - d$\\
                    }\Else{
                        $\hat{b}.\texttt{leftover} = 0$
                    }
                    $B = B\cup\hat{b}$\\
                    $\texttt{longest\_chain} = \max(\hat{b}.\texttt{height}, \texttt{longest\_chain})$\\
                }
            }
        }
    }
 }
 \KwRet{Expected reward of each player averaged over all chains with length = $\texttt{longest\_chain}$}
 \caption{\label{algo:simulator}Simulator}
\end{algorithm}
\end{small}
We tested the following strategies: \\\begin{inparaenum}[(a)]
\item $\texttt{major\_undercutting}(1.5)$ ($S_1$),
\item $\texttt{major\_undercutting}(1)$ ($S_2$),
\item \texttt{minor\_undercutting} ($S_3$), and
\item \texttt{petty} ($S_4$).
\end{inparaenum}
We assigned different strategies to slow and fast nodes to produce the payoff matrix in Table \ref{table:payoff}\footnote{Due to computational constraints, we produced results for $\lambda = 300$ times that of Bitcoin which would yield a throughput 60\% that of the Visa Network but we expect the results to be even more pronounced as we scale the blockchain further.}.
\begin{table}[!tbp]
\centering
    \begin{tabular}{rllll}
    \hline
         & $S_1$& $S_2$ & $S_3$ & $S_4$
         \\
         \midrule
         $S_1$ & (75.22, 19.13)	& (71.72, 23.12) & (74.29, 21.54) &	{\color{blue} (73.75, 22.17)} \\
         $S_2$ & (75.22, 19.86)	& (76.05, 19.56) & (72.17, 24.24) &	{\color{blue} (74.99, 21.74)} \\
         $S_3$ & {\color{blue} (63.30, 33.17)} & {\color{blue} (63.35, 33.84)} & {\color{blue} (66.90, 30.68)} & {\color{blue} (74.02, 23.6)} \\
         $S_4$ & {\color{blue} (63.41, 33.06)} & {\color{blue} (64.04, 33.29)} & {\color{blue} (56.66, 40.78)} & {\color{blue} (67.54, 30.26)} \\
         \hline
    \end{tabular}
    \caption{Payoff Matrix averaged over 100 simulations}
    \label{table:payoff}
    {\raggedright The utilities shown here are the percentage of total reward grabbed by the fast and slow sets collectively. (They do not add up to 100\% since some blocks may be underutilized)\par}
    {\raggedright The strategies which would be removed by iterated removal of dominant strategies (with tolerance of $\pm 1)$ are shown in {\color{blue}{blue}}.}
\end{table}
\subsubsection{Equilibrium Analysis}
\begin{definition}[Weakly Dominated Strategy]{\cite{osborne1994course}}
A strategy $s_i \in S_i$ is said to be weakly dominated if $\exists\ s_i' \in S_i$ for an agent $i$ if\\
\resizebox{0.99\hsize}{!}{
$
u_i(s_i', s_{-i}) \geq u_i(s_i, s_{-i})\ \forall s_{-i} \in S_{-i}\ \text{and}\ u_i(s_i', s_{-i}) > u_i(s_i, s_{-i})\ \text{for some}\ s_{-i} \in S_{-i}
$}\\
where $u_i(s_i, s_{-i})$ is the payoff for agent $i$ if he/she chooses the strategy $s_i$ and the other agents' vector of strategies is $s_{-i}$
\end{definition}
\begin{definition}[Mixed Strategy Nash Equilibrium (MSNE)]{\cite{osborne1994course}}
A strategy profile $(\sigma_1^*, \sigma_2^*, \ldots ,\sigma_n^*)$ is called a Mixed Strategy Nash Equilibrium (MSNE) for $n$ agents, if for each agent $i$, $\sigma_i^*$ is the best response to $\sigma_{-i}^*$. That is, $\forall i \in n$,
\[
u_i(\sigma_i^*, \sigma_{-i}^*)  \geq u_i(\sigma_i, \sigma_{-i}^*), \forall \sigma_i \in \Delta(S_i)
\]
where $\sigma_i = (p_1^i, p_2^i, \ldots, p_k^i)$ is the mixed strategy played by the player $i$ in which he/she chooses strategy $s_j$ with probability $p_j^i$.
\end{definition}
By applying \emph{Iterated Removal of Dominated Strategies} on the Payoff Matrix, we discard strategies $S_3$ and $S_4$ from the solution set of fast agents and discard $S_4$ from the solution set of slow agents. We then find the following two mixed strategy nash equilibria among the remaining set of strategies:
\begin{enumerate}
    \item Fast agents choose $S_1$ with probability of $0.74$ and $S_2$ rest of the time, while slow agents choose $S_2$ with probability of $0.32$ and $S_3$ rest of the time.
    \item Fast agents choose $S_1$ with probability of $0.06$ and $S_2$ rest of the time, while slow agents always pick $S_1$.
\end{enumerate}
We make the following additional remarks based on the payoff matrix:
\begin{enumerate}
    \item The blockchain system would have been secure against any strategic deviations if $(\texttt{petty}, \texttt{petty})$ had been an equilibrium strategy. However, we observe that it is dominated by nearly all undercutting strategies for slower nodes. Hence, it would always be more profitable for the slower nodes to undercut. A lack of publishing fairness could explain this.
    \item If the slow nodes choose \texttt{minor\_undercutting}, it would be more profitable for the faster nodes to choose\\ \texttt{major\_undercutting}.
    \item If all the nodes choose \texttt{major\_undercutting} the total revenue gathered by the network reduces to roughly 94.3\% from 97.8\% indicating that fewer transactions are being added to the longest chain. This means that the throughput is being under-utilized.
    \item The equilibria strategies are even worse for the slower nodes in terms of fairness since they grab an even smaller share of reward as compared to the strategy where all players act honestly.
\end{enumerate}
\par Thus, if agents act rationally not only would security of the blockchain be adversely affected, the lack of fairness among the rewards received by the slower miners would be exacerbated.

\section{Related and Future Work}
Garay et al.\cite{cryptoeprint:2014:765} and Kiyayas et al. \cite{cryptoeprint:2015:1019} show that existing blockchain protocols suffer from a loss of security properties as we scale the system. These security properties are fundamental to the operation of a robust DL. \cite{cryptoeprint:2014:765} and \cite{cryptoeprint:2015:1019} consider a model with two types of players, honest and adversarial. They consider that the adversary tries to attack the ledger by forking the blockchain. A successful fork would allow the adversary to perform a double-spending attack. We, on the other hand, analyze the issues that may arise even if all players are honest. We show that as we scale the system, not only do the security properties suffer, but fairness also suffers, and hence, our work is parallel to theirs. 
\par Further, there have been many novel blockchain protocols proposed to scale blockchain without losing the security properties, e.g., OHIE \cite{ohie}, IOTA \cite{popov2018tangle}, GHOST \cite{cryptoeprint:2013:881}, GhostDAG/Phantom \cite{cryptoeprint:2018:104}, and many more coming up every day. We believe that our fairness measures are generalized enough also to be applied to them. However, the analysis may vary according to the design of the protocol. Even though many of the DAG-based blockchain protocols manage to solve the issue of publishing fairness by allowing ``off-chain'' blocks to be considered as a part of the blockchain, the frontrunning issue persists. Due to unfairness in frontrunning, we find that they suffer from a different style of undercutting attacks. We demonstrate an example of such attack in a blockchain protocol known as OHIE in Section \ref{ohie_flaw} of the supplementary material.

\par Researchers have been analyzing blockchain systems from a lens of game theory, Chen et al. \cite{chen2020nonlinear} use game-theory to establish the tradeoff between full verification, scalability, and finality-duration. However, they do not consider network delays in their model, and hence, their bounds are quite optimistic. \cite{rational2020} study the consensus in a setting with rational as well as honest agents. \cite{www19} analyze Bitcoin's transaction fees using auctions. \cite{bitcoinf} study fairness in the transaction fees collected by the nodes in a transaction fee only model and propose a fair transaction processing protocol.

\par \cite{pass2017fruitchains} define $\eta$-approximate fairness and propose a blockchain protocol that satisfies $\eta$-approximate fairness in the case where players have symmetric network access, whereas we study the case where players have asymmetric network access.
\section{Conclusion}
In this paper, we introduced a notion of \emph{network fairness}, investigated the factors that influence network fairness, and studied its impact on the agent' revenue. We assumed a model in which players have asymmetric network connections, i.e., some players may have a faster connection while others may have a slower connection and described two mechanisms via which this asymmetry could result in a loss of fairness in the system. We considered two events associated with the mining process, frontrunning and block publishing, and measured fairness for these events. We showed that fairness can be quantified via $p_f$ and $\alpha_f$. $p_f$ is a measure of \emph{frontrunning} due to network delays in broadcasting a transaction. $\alpha_f$ or publishing fairness is the ratio of blocks of one node that get accepted due to network delays over blocks of another node. We found that both of them deteriorate as we increase the throughput of existing protocols. Hence, even though it might look like the agents walk together (or the system is fair) while the throughput is low, at higher speeds, some agents may run faster (or gain more than their fair share of rewards).
\par We also discussed that not only does a lack of fairness impacts the revenue of some agents, it might also create an incentive for them to deviate from the honest mining strategy, which might impact the security of the blockchain system and further exacerbate lack of fairness in rewards.
\par Thus, we conclude that even though blockchain is an ambitious technology, its potential is hindered by the underlying network infrastructure.



\bibliographystyle{ACM-Reference-Format} 
\bibliography{references}

\clearpage
\appendix
\section{Fail Function}
\label{proof:fail}
Let us denote $p$ to be the probability of a nonce being successful and $q$ to be the rate of queries to the hash function. In time $t$, the miner will be able to make $q\times t$ such queries.\\\\
The probability of a query being unsuccessful will be $(1-p)$.\\\\
$\implies$ the probability of $q\times t$ such queries being unsuccessful will be $(1-p)^{q\times t}$.\\\\
For $n$ players (the total number of players in the system), the number of queries in the same duration will be $n\times q\times t$\\\\
$\implies$ the probability of these nodes failing to mine a block will be $(1-p)^{n\times q\times t}$.\\\\
The value $n\times q$ is known as the hash rate of the system, denoted by $H$.\\\\
Let us denote the probability of $\phi$ fraction of the network failing to mine a block by the function $\text{fail}(\phi,t) = (1-p)^{\phi\times nqt} = (1-p)^{\phi\times Ht}$.\\\\
For convenience, we sometimes use $\text{fail}(\phi) = \text{fail}(\phi, 1)$, the probability of $\phi$ fraction of the network failing to mine a block in a unit of time.
\section{Proof of Theorem \ref{thm:frontrun}}
\label{proof:frontrun}
\begin{theorem*}[Lower bound of $\{p_f\}^{M}_{m}$]
$\{p_f\}^{M}_{m} >  M\lambda d - \frac{1}{2}\bigg(M\lambda d\bigg)^2$
\end{theorem*}
\begin{proof}
Let $\texttt{frontrun\_2}(M,m)$ denote the event that top $M$ fraction of nodes succeed in mining the block before the transaction reaches the bottom $1-m$ fraction.
\begin{align*}
       &P(\texttt{frontrun\_2}(M,m)) \\&= 1 - P(M\ \text{fraction of nodes fail to mine a block in time}\ d)\\
    &=\ 1- \text{fail}(M, d)\\ &= 1-(1-p)^{M\times Hd} \\&> pMHd - \frac{1}{2}\bigg(pMHd\bigg)^2 \\&= M\lambda d - \frac{1}{2}\bigg(M\lambda d\bigg)^2
\end{align*}
\end{proof}
\section{Proof of Theorem \ref{thm:alpha_f}}
\label{proof:alpha_f}
\begin{theorem*}
\begin{equation}
    \alpha_f = \frac{
    \psi
    }{
    1-\psi
    }
\end{equation}
where
\begin{equation}
\begin{split}
    \psi = \sum_{i=1}^{\infty}  \bigg[&\prod_{j=0}^{i-1} [(1-\text{fail}(\phi_A^j))(1-\text{fail}(\phi_B^j))+\text{fail}(\phi_A^j)\text{fail}(\phi_B^j)] \\
    &\times (1-\text{fail}(\phi_A^i))\text{fail}(\phi_B^i) \bigg]
\end{split}
\end{equation}
\end{theorem*}
\begin{proof}
\par The proof follows from first finding the probability of A being successful in getting its block accepted in $i^\text{th}$ round conditioned on the probability that neither of the blocks gain majority till the $(i-1)^\text{th}$ round. We then use \emph{Bayes' Theorem} to find out the total probability.
\par The block mined by A will get accepted if the chain that includes A's block becomes longer than the one that includes B. As soon as the longer chain is received by a node that had previously accepted B, the node must reject B and accept A instead. We assume that as soon as the chain mined by either fraction becomes longer than that of their counterpart, their counterpart will switch to the longer chain. Let $\phi_A^i$ and $\phi_B^i$ be the fraction of network accepting A and B at the $i^\text{th}$ round. We slightly abuse notation here to use A and B to refer to the blocks mined by A and B, respectively.
\begin{equation}
    \begin{split}
        &P\Big(\frac{\text{A gets accepted in $i^\text{th}$ round}}{\text{The network is undecided in $(i-1)^\text{th}$ round}}\Big) 
        \\&= (1-\text{fail}(\phi_A^i))\times\text{fail}(\phi_B^i)
    \end{split}
\end{equation}
The network fails to decide in the $i^\text{th}$ round if the length of the chains remain equal, i.e.,  either both fractions mine a block (which is highly unlikely), or both fractions fail to mine a block.
\begin{equation}
    \begin{split}
        &P\Big(\frac{\text{The network fails to decide in $i^\text{th}$ round}}{\text{The network is undecided in $(i-1)^\text{th}$ round}}\Big) \\&= (1-\text{fail}(\phi_A^i))\times(1-\text{fail}(\phi_B^i))+\text{fail}(\phi_A^i)\times\text{fail}(\phi_B^i)
    \end{split}
\end{equation}
\begin{align*}
\begin{split}
&P(\text{The network is undecided in $i^\text{th}$ round})\\
&= [(1-\text{fail}(\phi_A^i))(1-\text{fail}(\phi_B^i))+\text{fail}(\phi_A^i)\text{fail}(\phi_B^i)]\\
&\qquad\times P(\text{The network is undecided in $(i-1)^\text{th}$ round})\\
\end{split}\\
&= \prod_{j=0}^{i} (1-\text{fail}(\phi_A^j))(1-\text{fail}(\phi_B^j))+\text{fail}(\phi_A^j)\text{fail}(\phi_B^j)\\
\begin{split}
    &P(\text{A eventually gets accepted}) = \psi\\
    &= \sum_{i=1}^{\infty}P(\text{The network is undecided in $(i-1)^\text{th}$ round}\cap\\
    &\qquad\text{A gets accepted in $i^\text{th}$ round})\\
\end{split}\\
\begin{split}
    &= \sum_{i=1}^{\infty} \bigg[ P(\text{The network is undecided in $(i-1)^\text{th}$ round})\\
    &\qquad\qquad\times P\Big(\frac{\text{A gets accepted in $i^\text{th}$ round}}{\text{The network is undecided in $(i-1)^\text{th}$ round}}\Big) \bigg]\\
\end{split}\\
\begin{split}
    &= \sum_{i=1}^{\infty}  \bigg[\prod_{j=0}^{i-1}[(1-\text{fail}(\phi_A^j))(1-\text{fail}(\phi_B^j))+\text{fail}(\phi_A^j)\text{fail}(\phi_B^j)] \\
    &\qquad \times (1-\text{fail}(\phi_A^i))\text{fail}(\phi_B^i) \bigg]\\
\end{split}\\
    \alpha_f &= \frac{P(\text{A eventually gets accepted})}{P(\text{B eventually gets accepted})}\\
    \alpha_f &= \frac{P(\text{A eventually gets accepted})}{1-P(\text{A eventually gets accepted})}\\
    \therefore{}\alpha_f &= \frac{\psi}{1-\psi}\\
\end{align*}
\end{proof}
\section{Frontrunning in OHIE\label{ohie_flaw}}
\par In this section, we describe a strategic deviation for the OHIE Protocol based on frontrunning. OHIE is a permissionless blockchain protocol that aims to achieve high throughput while tolerating up to 50\% of the computational power being controlled by the adversary. 

\par OHIE composes $k$ (e.g., $k=1000$) parallel instances or ``chains'' of Nakamoto consensus. Each chain has a distinct genesis block, and the chains have ids from $0$ to $k-1$ (which can come from the lexicographic order of all the genesis blocks). For each chain, Bitcoin's longest-chain-rule is followed. The miners in OHIE extend the $k$ chains concurrently. They are forced to split their computational power across all chains evenly.

\par The total block ordering in OHIE is generated according to the increasing order of ranks and breaking ties by picking one with a lower chain id earlier. The miner of a block picks the rank of the block that follows it in a chain, i.e., the $next\_rank$. Notice that the chains may not be equal in length and might have different $next\_rank$s at their last positions. This implies that if a block is mined in a chain having lower $next\_rank$ than another chain, the block might end up earlier in the Total Block Ordering than a block that has already been mined.
\par According to the specifications of the protocol, a miner should pick the highest possible $next\_rank$ in order to ensure that the chain the block becomes a part of, catches up to the longest chain. A miner can also choose which blocks to mine on top of. If a rational agent wishes to insert his block earlier in the total ordering, he can choose a block that has picked a lower $next\_rank$ over one with a higher $next\_rank$. We describe a rational deviation based on this fact.

\par This deviation is similar to undercutting in Bitcoin, described by Carlsten et al. \cite{carlsten2016instability} We assume that at least some agents are rational and follow a \emph{petty compliant} strategy. The agents still choose to mine on top of the longest chains, but in case of a tie or fork, they pick the block offering lower $next\_rank$. Doing so provides an opportunity for \emph{strategic frontrunning} by possibly achieving a lower rank (and hence include the high-value transactions of already published blocks of higher rank). 

\subsection{Expected Reward of Including Transactions}
In our case, the frontrunning is not guaranteed to be successful. The block could end up on the chain from which the transaction was included. In which case, it will end up losing the transaction fees since it will not precede the original block from which the transaction was included. The block is equally likely to become a part of the $k$ chains. Therefore we define the expected reward of including a transaction as follows:
\[\mathbb{E}[R] = P(\text{Successful frontrunning})\times R\]
The probability of the succeeding to frontrun a block $b$ will be:
\begin{equation}
    \begin{split}
        P(\text{Frontrun}\ b) = \sum_{i=0}^{k-1} P(&\text{Block mining on the $i^\text{th}$ chain} \\&\cap \text{Block preceding block $b$ in TBO})
    \end{split}
\end{equation}
We can expect a rational agent to include the transactions from the mempool as well as transactions from other blocks that offer the highest expected reward.

Similar to \cite{carlsten2016instability} we find that for a \emph{petty compliant} miner \emph{frontrunning} is a better strategy since it guarantees as much reward as the honest node's strategy. As we try to scale the system to higher throuputs by increasing the number of chains, the number of possible blocks to frontrun will also increase (due to a higher probability of some chains being longer in length). Thus, the probability of \emph{strategic frontrunning} also increases as we try to scale the system.

\subsection{Undercutting Agents\label{ohie_uc}}
Now consider if a more aggressive agent does not mind forking a chain. In the following example (modified version from the original paper \cite{ohie}):
\begin{figure}[h]
    \centering
    \label{fig:ohie_initial}
    \includegraphics[width=0.5\textwidth, natwidth=874.653, natheight=415.383]{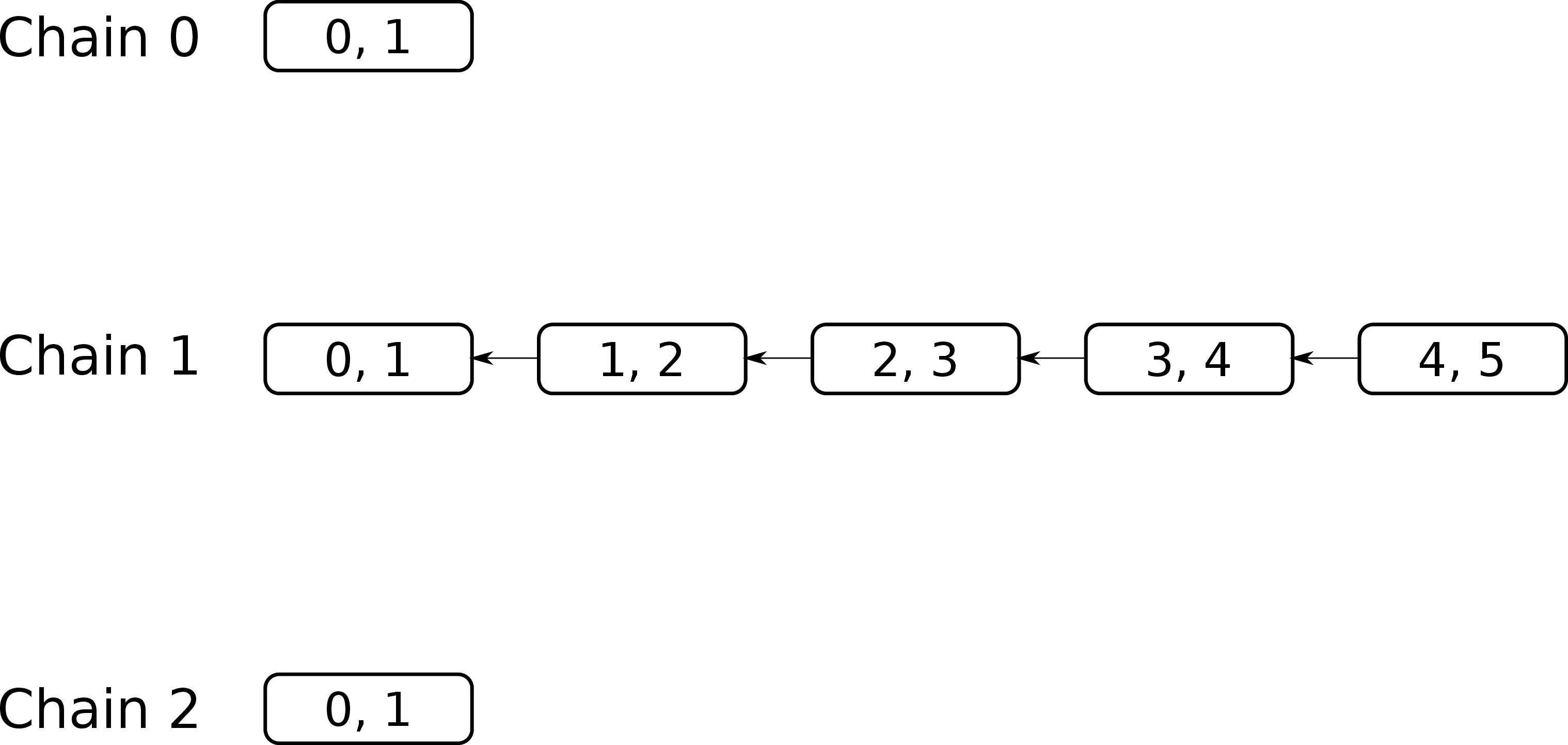}
    \caption{Example initial state of an OHIE execution with $k=3$. Each block is marked with a tuple $(rank, next\_rank)$.}
\end{figure}
\par In the example initial state, the chains are of unequal length. (Such an event is possible since the blocks extend chains at random, some chains can receive more blocks than others)
\begin{figure}[h]
    \centering
    \label{fig:ohie_honest}
    \includegraphics[width=0.5\textwidth, natwidth=1077.17, natheight=425.197]{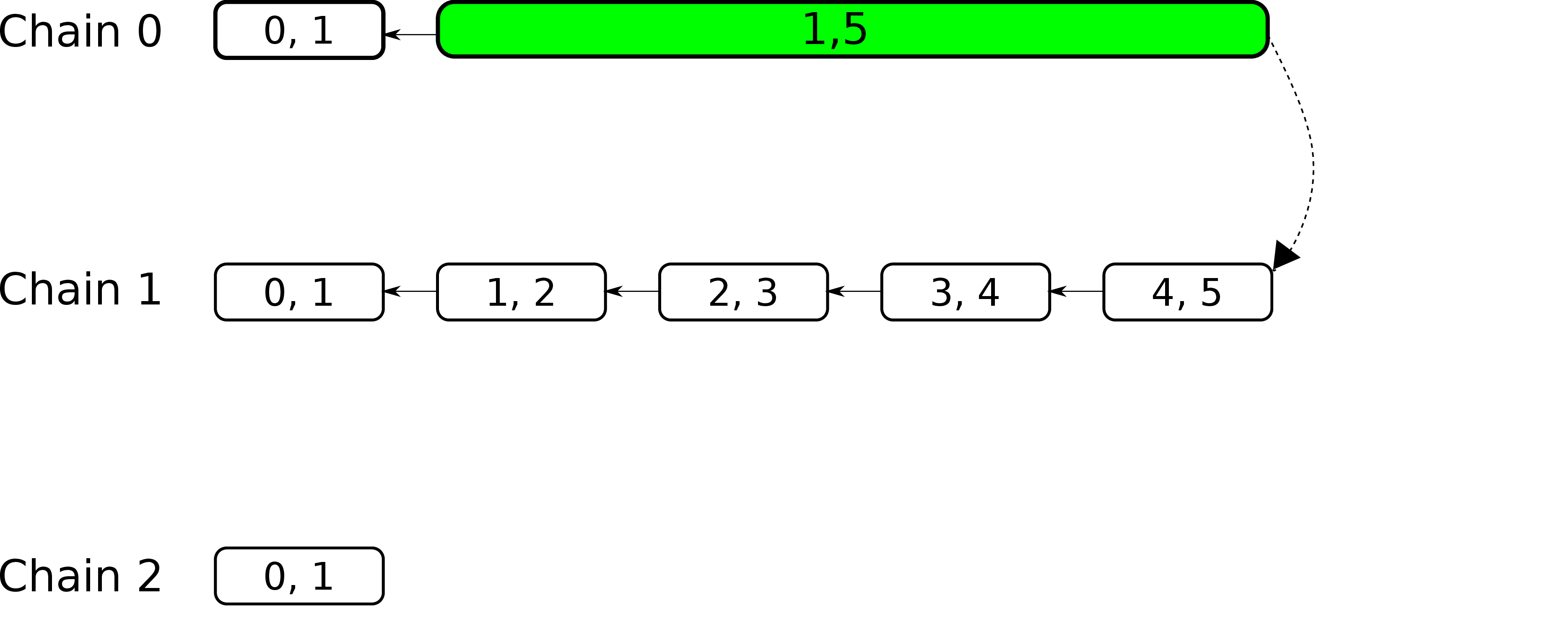}
    \caption{The state after the honest node extends Chain 0. A dotted arrow denotes the trailing pointer.}
\end{figure}
\par Let us say that an honest node mines the next block on Chain 0. Since the node follows the default strategy of setting the $next\_rank$ to be the maximum $next\_rank$ among all chains, it sets the $next\_rank$ to be 5. The mechanism by which the $next\_rank$ is specified is by including a trailing pointer to the last block on Chain 1. Hence, the $next\_rank$ of this block is implicitly set to the $next\_rank$ of the block that the trailing pointer points to.
\begin{figure}[h]
    \centering
    \label{fig:ohie_uc_mining}
    \includegraphics[width=0.5\textwidth, natwidth=1066.79, natheight=415.836]{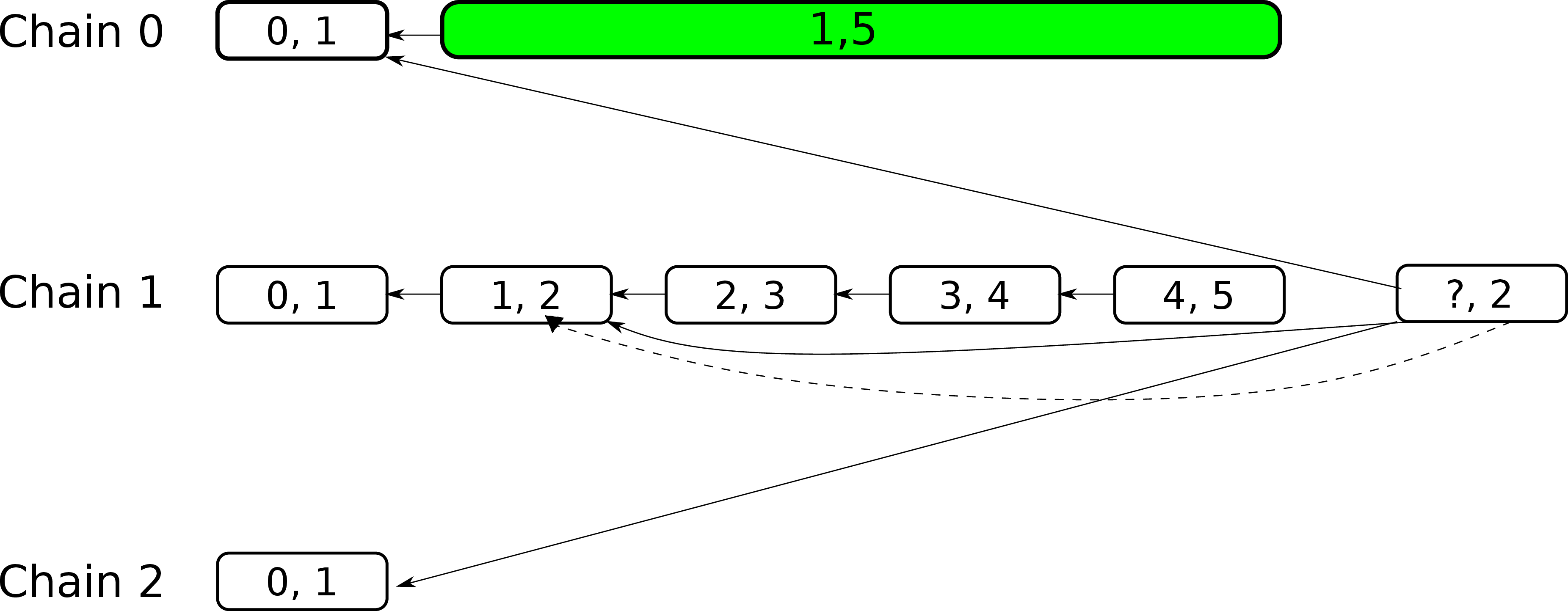}
    \caption{The block formulated by an aggressive miner for undercutting. The dotted arrow denotes the trailing pointer. The three solid arrows point to the Merkle tree of pointers to preceding blocks.}
\end{figure}
\par Consider a agent that tries to undercut aggressively. It picks the blocks it wants to drop (in this case $(1,5)$ on Chain 0 and $(2,3)$, $(3,4)$, and $(4,5)$ on Chain 1) and then picks the trailing pointer to $(1,2)$. However, it could have picked $(0,1)$ on any chain as the trailing pointer, in which case it would have been assigned a $next\_rank$ of $rank+1$. This deviation would have easily been detected since its trailing pointer would have lesser $next\_rank$ than the block it precedes, indicating the deviation. In our case, the agent tries to deviate in a manner that is not distinguishable from a fork. In order to do so, it selects a $next\_rank$ that is greatest among the blocks in its Merkle tree.
\begin{figure}[h]
    \centering
    \label{fig:ohie_uc_cases}
    \includegraphics[width=0.5\textwidth, natwidth=1077.17, natheight=425.197]{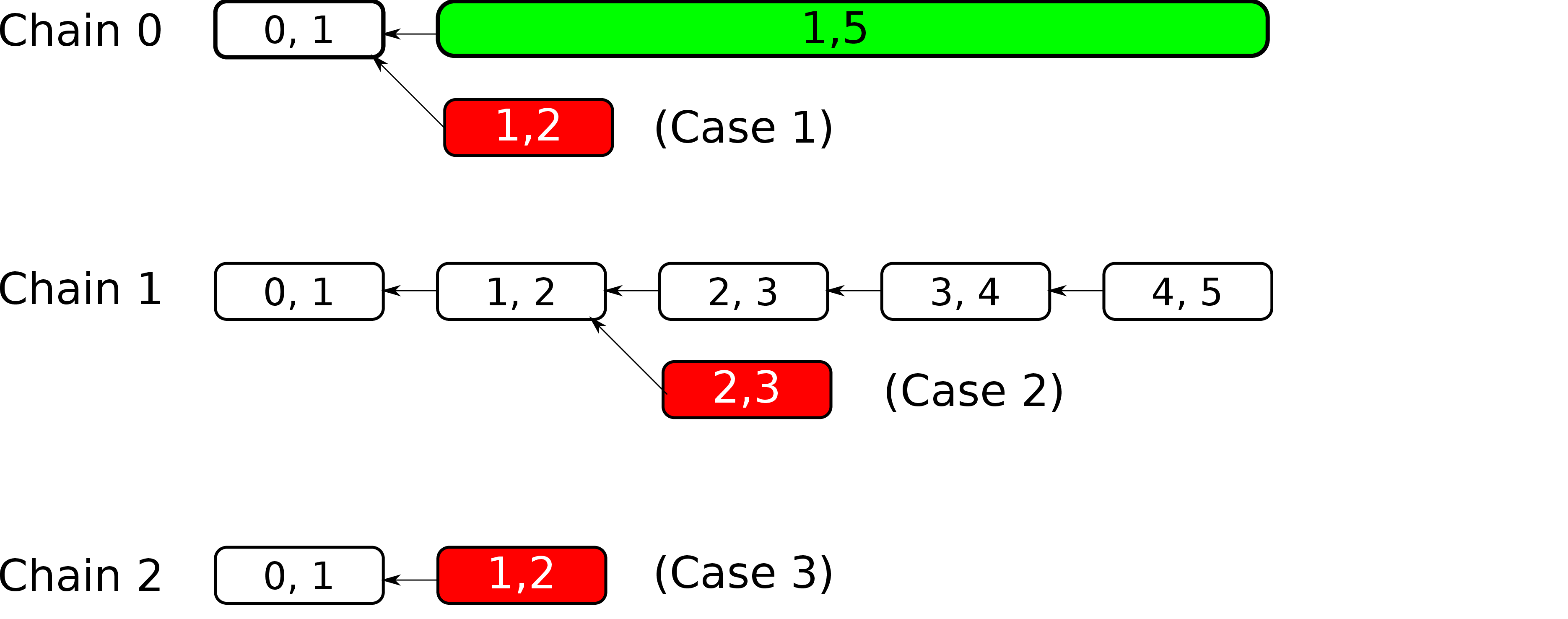}
    \caption{The three possible cases that could arise if the undercutter's mining is successful}
\end{figure}
\par If the undercutter is able to mine the block successfully, it may end up on one of the three chains depending upon the last $\log_2k$ bits of the hash. We consider these three cases separately:
\begin{itemize}
    \item If the block ends up on Chain 0, the new block forks the chain. Since the two forks are equal it is upto the nodes in the network to choose the fork they wish to extend. Choosing the undercutter's block in this case would be a better option for the \emph{petty compliant} agents since it offers a lower $next\_rank$. If the majority of the agents are \emph{petty compliant}, then the undercutter is successful.
    \item If the block ends up on Chain 1, the new block forks the chain. Since the undercutter's fork is shorter than the original chain, it would be orphaned by all agents. In this case, the undercutter is unsuccessful.
    \item If the block ends up on Chain 2, the new block extends the original chain. All agents will prefer to mine on top of undercutter's block. In this case, the undercutter is successful.
\end{itemize}
\par Hence, in 2 out of 3 cases, i.e., with a probability of $2/3$, the undercutter is successful. Therefore, the agent may undercut if the reward obtained by picking transactions from the blocks it drops is $3/2$ times more than the reward by mining honestly. That is, if the $\mathbb{E}[\text{Reward obtained by undercutting}] > \text{Reward obtained by mining honestly}$ then undercutting may be a better strategy.

\end{document}